\renewcommand{\thesubsection}{\arabic{section}.\arabic{subsection}}
\renewcommand{\thesection}{\arabic{section}}
\newtheoremstyle{conditionstyle}
  {\topsep}   
  {\topsep}   
  {\normalfont}  
  {0pt}       
  {\bfseries} 
  {:}         
  {5pt plus 1pt minus 1pt} 
  {}          
\newtheoremstyle{theoremstyle}
  {\topsep}   
  {\topsep}   
  {\itshape}  
  {0pt}       
  {\bfseries} 
  {:}         
  {5pt plus 1pt minus 1pt} 
  {}          
\theoremstyle{conditionstyle}
\newtheorem{condition}{Condition}
\theoremstyle{theoremstyle}
\newtheorem{theorem}{Theorem}
\newcommand\independent{\protect\mathpalette{\protect\independenT}{\perp}}
\def\independenT#1#2{\mathrel{\rlap{$#1#2$}\mkern3mu{#1#2}}}
\title{
Correcting Nonresponse Bias Using Panel Data \\
on Data Requests and Responses%
\footnote{This work was completed with funding from the National Science Foundation, Award \#2100017, The Wisconsin Alumni Research Foundation, The Richard M. Schulze Family Foundation, and American Family Insurance.
We thank Carolina Arteaga, Briana Ballis, Elizabeth Cascio, Soledad Giardili, Jonathan Guryan, Kevin Mumford, Peter Nenka, Victoria Prowse, Diego Salazar, Chris Taber, Justin Tobias, and Alex Torgovitsky as well as seminar participants at the Southern Economic Association Annual Meeting, the Society of Labor Economists Annual Meeting, and the Midwest Econometrics Group Conference for helpful comments.
We thank Rachel Keranen for copyediting assistance.
This research was conducted with administrative data provided by the University of Wisconsin-Madison.}
}
\author{
Clint Harris%
\footnote{
University of Wisconsin-Madison,
clint.harris@wisc.edu
}
\and
Jonathan T. Eckhardt%
\footnote{
University of Wisconsin-Madison,
jon.eckhardt@wisc.edu
}
\and
Brent Goldfarb%
\footnote{
University of Maryland,
brentg@umd.edu
}
}
\date{\today}
\begin{document}
\maketitle


\begin{abstract}
When subjects who respond to requests for data, 
such as in surveys or post-treatment follow-up, 
are not representative of the population as a whole, inferences drawn from the data 
can be misleading.
We show that if subjects' accumulated requests and responses over time are recorded and organized as panel data, requests can be used 
as instruments 
to correct for nonresponse bias
even if total requests are not randomized between subjects.
We demonstrate our method by estimating an 18-percentage-point gender gap in entrepreneurial career intentions using a survey of undergraduates at the University of Wisconsin-Madison.

\hfill

\noindent JEL Codes: C83, L26, J16 \\
Keywords: Survey Nonresponse, Attrition, Sample Selection, Entrepreneurship, Gender
\end{abstract}

\clearpage

\section{Introduction}
Researchers, governments, and private companies routinely request voluntary data submission, often more than once, from samples of populations of interest.
These data gathering efforts usually only succeed with a portion of subjects.
For example, this pattern occurs in experiments with delayed measurement of outcomes, such as those investigating effects of masks on COVID 19 transmission \citep{abaluck2022impact}, effects of education on STIs and fertility \citep{duflo2015education}, and effects of neighborhoods on health and safety \citep{katz2001moving}.
The same pattern arises in survey data gathered 
by universities requesting student course reviews, 
by political pollsters seeking to predict election results, 
by private businesses seeking feedback on product satisfaction, and 
by governments seeking information from individuals, businesses, and nonprofit institutions.%
\footnote{For instance, the Census Bureau \citep{spiers2022strategic}, the Bureau of Labor Statistics \citep{nlsy97retention}, and the Internal Revenue Service \citep{irscp59} all document procedures for making multiple attempts to elicit responses from surveyed individuals for data they collect.}
If the individuals who voluntarily provide data are not representative of the population as a whole, inferences drawn from the data about the real world can be misleading.

An insight from \cite{heckman1976common,heckman1979sample} is that variation in data observation rates between otherwise similar subpopulations can be used to correct for selection bias.
To perform such a correction, it is necessary to retain data on nonrespondents along with respondents.
Our core insight is that response behavior is observed for each subject each time they are sent a data request, rather than once.
Retaining all such observations, at odds with standard practice, naturally provides variation in data observation rates between otherwise similar subpopulations.
Briefly, we propose that researchers retain panel data on subjects' responses at the time of each data request so that cumulative request counts can be used as instrumental variables to correct for nonresponse bias.%
\footnote{
Methods that leverage instruments include parametric methods such as those of \cite{heckman1979sample} and the extension to binary outcomes of \cite{van1981demand}, as well as nonparametric bounding methods such as those of \cite{manski1990nonparametric}, \cite{manski1994selection}, and \cite{manski2000monotone}.
}

We join \cite{behaghel2015please} and \cite{dutz2021selection} in ordering subjects' responsiveness by their response timings to correct for nonresponse bias.
Our primary distinction from their methods is to use response and request timing to construct a panel dataset containing subjects' responses as of each data request, rather than using response timing to order responsiveness between subjects in cross-sectional data.
Constructing a panel dataset has two advantages.
First, it makes explicit the assumption that potential responses are unaffected by time, which is made implicitly when potential responses are modeled cross-sectionally.
Second, 
it converts response timing data into a format that accommodates existing nonresponse corrections that leverage instruments, as well as accommodating additional instruments that may or may not vary over time, such as randomized response incentives, without requiring revisions to estimation procedures.

Our main result is a proof that local average responses (LARs) for individuals who comply with each data request are identified from  
the joint distribution of responses and response rates for all requests.
Identification of multiple local average responses is sufficient to estimate or bound population averages of requested variables using existing estimation methods that correct for selection bias, which are united by the marginal treatment response framework of \cite{heckman2005structural,heckman2007econometric}.
We establish identification under assumptions that resemble those of \cite{imbens1994identification} and \cite{angrist1996identification}.
Identification follows regardless of whether requests are randomized or sent uniformly to all subjects if potential responses are time-invariant, and identification fails regardless of randomization if potential responses vary systematically with time.

In Section \ref{sect:application}, we demonstrate the use of nonrandom request instruments by estimating the gender gap in entrepreneurial career intentions among undergraduates using survey data from the University of Wisconsin-Madison.
Estimating a parametric \cite{van1981demand} model for binary outcomes, we find an 18-percentage-point nonresponse-corrected male-female gap in entrepreneurial intentions. 
We find statistically insignificant evidence of positive selection for men and negative selection for women driving the 20 percentage point uncorrected gap among respondents.
This model is identified with two requests, so we also implement and pass a ``predictable trends'' overidentification test of the null that later requests have no direct effect on the outcome, supporting the validity of our parametric and nonparametric assumptions in this application.
In Supplemental Appendix \ref{appendix:selection_in_surveys}, we estimate labor market outcomes using a synthetic version of the Norway in Corona Times survey constructed to match estimates reported by \cite{dutz2021selection}.
The 95\% confidence intervals of our estimated population means cover ground truth averages from administrative data reported by these authors for five out of six survey variables, with respondent averages differing significantly from the ground truth for all variables.

\section{Correcting Nonresponse With Multiple Requests}\label{sect:Theory}
We consider a panel of $N$ subjects indexed by $i$ who are observed in time periods $t=0,1,...,T$, with data collection commencing when $t=1$ and concluding when $t=T$.
Subject $i$ receives data requests for an unobserved time-invariant \textit{requested variable}, denoted by $Y_{i}^*$, with their observed \textit{requests} accumulated by the end of time $t$ given by $R_{it}$.
In each time period, we also observe their \textit{response choice} $S_{it}$, with $S_{it}=1$ if they respond during time $t$ and $S_{it}=0$ if they do not, with their \textit{response}, $Y_{it}$, observed if and only if $S_{it}=1$.
We assume $S_{it}=1$ for subject $i$ at most once.
Finally, we denote subject $i$'s \textit{retained response choice} for time $t$ as $\hat S_{it} = \sum_{k=0}^t S_{ik}$, with their \textit{retained response} for time $t$ given by $\hat Y_{it} = \sum_{k=0}^t Y_{ik}S_{ik}$.

\subsection{Bias in Selected Samples}
In settings with nonresponse, the sample mean of retained responses in period $T$ among respondents is a consistent estimate of $\mathbb{E}[\hat Y_{iT}|\hat S_{iT}=1]$, with expected bias relative to $\mathbb{E}[Y_{i}^*]$ given by
\begin{equation*}\label{Bias}
B_w = \mathbb{E}[\hat Y_{iT}|\hat S_{iT}=1] - \mathbb{E}[Y_{i}^*].
\end{equation*}
With random nonresponse and no measurement error, $\mathbb{E}[\hat Y_{iT}|\hat S_{iT}=1] = \mathbb{E}[Y_{i}^*]$, and valid inferences on populations are possible using sample means of observed responses.
If these conditions are not met, identification of population averages requires additional information.

\subsection{Modeling Nonresponse}
We assume that observed variables are realizations of potential outcomes, following \cite{rubin1974estimating}.
First, let $S_{it} 
=  S_{it}(R_{it})
        \prod_{j=0}^{t-1}
        (1- S_{ij}(R_{ij}))$
where $S_{it}(r) \in \{0,1\}$ indicates subject $i$'s \textit{willingness to respond} to $r$ requests at time $t$.%
\footnote{
We use this definition in lieu of $S_{it} 
=  S_{it}(R_{it})$ because we assume that subjects respond at most once.
}
Similarly, let $Y_{it} = Y_{it}(S_{it},R_{it}) = 
Y_{it}(1,R_{it})S_{it}
+
Y_{it}(0,R_{it})(1-S_{it})$
with $Y_{it}(1,r)$ denoting the response subject $i$ would give at time $t$ if were to respond given $r$ accumulated requests, with $Y_{it}(0,r)$ missing for all $i$, $t$, and $r$.
$S_{it}(r)$ and $Y_{it}(1,r)$ are defined for all $r \in \{0,...,\max(R_{it})\}$, respectively, while they are observed only for the realized $R_{it}$.

Following the marginal treatment response framework of \cite{heckman2005structural,heckman2007econometric}, 
we assume that selection bias is driven by dependence between responses and response aversion.
Specifically, we define the marginal survey response function $m(u) \equiv \mathbb{E}[Y_i^* | U_{it}=u]$ where $U_{it} \in [0,1]$ denotes the \textit{response aversion} percentile of subject $i$ at time $t$.
We assume that response aversion determines potential response choices such that
\hfill
\begin{equation}\label{assumption:vytlacil_2002}
    S_{it}(r) = 1(U_{it} \leq P(r)),
\end{equation}
for all $r$, 
where $P(r) = \mathbb{E}[S_{it}(r)]$ gives the \textit{willing response propensity} for request $r$.
We then have
\hfill
\begin{equation}\label{equation:LAR_to_MSR}
    \mathbb{E}[Y_i^*|P(r')<U_{it} \leq P(r)]
    =
    \mathbb{E}[Y_i^*|S_{it}(r)-S_{it}(r')=1],
\end{equation}
for all $r$ and $r'$.

It follows that we can learn about $m(u)$ if we can estimate $\mathbb{E}[Y_i^*|S_{it}(r)-S_{it}(r')=1]$, which we refer to as the \textit{local average response} for $r-r'$ \textit{compliers}.
If $m(u)$ is parameterized, as in a \cite{heckman1979sample} model, point estimation is possible if there are weakly more identified local average responses than parameters.
Otherwise, $m(u)$ can be bounded using methods such as those of \cite{horowitz2000nonparametric} or \cite{lee2009training}.
Local average responses are identified as 
\begin{equation}
    \mathbb{E}[Y_i^*|S_{it}(r)-S_{it}(r')=1] 
= 
\frac{
\mathbb{E}[\hat Y_{it}|R_{it}=r]
-
\mathbb{E}[\hat Y_{it}|R_{it}=r']
}
{
\mathbb{E}[\hat S_{it}|R_{it}=r]
-
\mathbb{E}[\hat S_{it}|R_{it}=r']
}
\end{equation}
if requests are valid instruments for requests, so we proceed by establishing this result with assumptions similar to those used to establish instrument validity by \cite{imbens1994identification} and \cite{angrist1996identification}, which were shown by \cite{vytlacil2002independence} to imply the relationship in (\ref{assumption:vytlacil_2002}).

\subsection{Local Average Response Identification Assumptions}

Our goal is to attribute differences in $\hat Y_{it}$ across different values of $R_{it}$ to the average of $Y_i^*$ for compliers who require more requests to induce response.
If more requests are systematically sent to subjects or time periods with unrepresentative potential responses, we cannot reliably attribute differences in responses between requests to marginal compliers.
We begin by assuming that requests are sent in such a way that subjects' potential responses and potential response choices across time are unrelated to their request histories.



\begin{condition}[Intertemporal Independence]\label{cond:independence} \hfill 

$(Y_{it'}(1,r), S_{it'}(r)) \independent (R_{it'},R_{it})$ for all $r$, for all $t$ and $t'$ such that $t' \leq t$.
\end{condition}
\noindent This assumption strengthens the independence assumption of \cite{imbens1994identification} by disallowing requests from being selectively sent to subjects with nonrepresentative potential responses to that request \textit{or any prior request}.%
Randomizing requests and setting $T=1$, as proposed by \cite{dinardo2021practical}, supports this assumption, with it also holding for $T>1$ with uniform requests if potential responses and response choices are time-invariant on average.
Less extreme coarsening of the time index, for instance defining $t$ in weeks instead of minutes, can support this assumption by essentially averaging over temporal idiosyncrasies that occur in continuous time.

We next assume that requests do not directly affect potential responses, allowing us to write $Y_{it}(1) = Y_{it}(1,r)$.
\begin{condition}[Exclusion]\label{cond:exclusion}
$Y_{it}(1,r) = Y_{it}(1,r')$ for all $r,r'$, for all $i$ and $t$.
\end{condition}
\noindent The importance of avoiding experimenter demand effects or priming in surveys is well established \citep{stantcheva2023run}.
Our assumption expresses this as a special case of similar assumptions made in the context of sample selection by \cite{heckman1979sample} and in the context of treatment effect identification by \cite{angrist1996identification}.
In contrast with intertemporal independence, exclusion is more likely to be violated due to continuous time variation in potential responses and requests if time periods are defined coarsely, as subjects receiving different numbers of requests in the same discrete time period may receive them at different points in continuous time.%

With independence and exclusion, differences in responses between requests can only be explained by differences in responses between distinct complier groups, defined by common values of $S_{it}(r)$ for all $r$.
Our next assumption restricts the possible number of these groups by imposing a natural ordering on them.
\begin{condition}[Intertemporal Monotonicity]\label{cond:monotonicity} \hfill

a. $ S_{it}(r) \geq  S_{it}(r')$ for all $r>r'$, for all $i$ and $t$.

b. $ S_{it}(r) =  S_{it'}(r)$ for all $t$ and $t'$, for all $i$ and $r$.

c. $R_{it} \geq R_{it'}$ for all $t>t'$, for all $i$.

d. $R_{i0}=0$ for all $i$ and $S_{it}(0)=0$ for all $i$ and $t$. 
\end{condition}
\noindent Our assumption strengthens that of \cite{imbens1994identification} by imposing a sign restriction on the effects of requests on response choices, time-invariance on subjects' potential response choices, and intuitive restrictions on the accrual of requests and responses over time.
Our monotonicity assumption implies that $S_{i}(R_{it}) = \hat S_{it}$, which enables identification of $P(r)$ and $\mathbb E[Y_{it}(1,R_{it})|S_i(R_{it})=1]$ from average retained response choices and retained responses.
Relatedly, we assume that requests strictly increase response rates, slightly strengthening the first part of our monotonicity assumption.
\begin{condition}[Relevance]\label{cond:relevance}
$P(r)>P(r')$ for all $r>r'$.
\end{condition}
\noindent As in treatment effects applications, relevance can be tested by estimating $P(r)$ for all $r$, with $P(r)=\mathbb{E}[\hat S_{it}|R_{it}=r]$ if intertemporal independence and intertemporal monotonicity hold.

Independence, exclusion, and monotonicity allow us to determine local average responses for compliers to each request.
We also assume that responses are free of systematic measurement error.
\begin{condition}[Classical Measurement Error]\label{cond:classical_measurement_error} \hfill

$\mathbb{E}[Y_{it}(1,r)S_{it}(r)] 
=
\mathbb{E}[
        Y_{i}^*
         S_{it}(r)
        ]$ for all $r$.
\end{condition}
\noindent This assumption is weaker than $Y_{it}(1,r) = Y_i^*$, which is common in the related literature, because it allows for idiosyncratic shocks that cancel out within complier groups, such as if students' potential responses regarding career intentions change over time in response to new information, such as grades on assignments.%
Intertemporal independence and exclusion rule out systematic time-varying measurement error and measurement error caused by requests, respectively, but they do not rule out systematic inaccuracies within or across complier groups.
For example, if less conscientious subjects give systematically inaccurate responses to late requests, nonresponse corrections using request instruments will tend to replicate similar inaccuracy in their predictions for non-respondents.

\subsection{Local Average Response Identification Proof}\label{subsection:proof}
Our main result proves that local average responses are identified under conditions 1–5.
\begin{theorem}
    If Conditions 1–5 hold, the local average response for subjects who respond to request $r$ but not $r'$ is identified as
        \begin{align*}
        \mathbb{E}[Y_i^*| S_i(r)- S_i(r')=1] = 
        \frac{
        \mathbb{E}[\hat Y_{it}|R_{it}=r] 
- 
\mathbb{E}[\hat Y_{it}|R_{it}=r']
        }
        {
        \mathbb{E}[\hat S_{it}|R_{it}=r]-\mathbb{E}[\hat S_{it}|R_{it}=r']
        }
    \end{align*}
    from the joint distribution of $Y$, $S$, and $R$ for any $r>r'$ such that $\mathbb{E}[\hat Y_{it}|R_{it}=r]$ and $\mathbb{E}[\hat Y_{it}|R_{it}=r']$ are finite and $P(r)>P(r')$.
\end{theorem}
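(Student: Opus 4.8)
The plan is to recast the claimed formula as a Wald (LATE-type) ratio and verify that, under the five conditions, its numerator and denominator reduce to willingness-weighted population moments of $Y_i^*$. The first move is to exploit Intertemporal Monotonicity to collapse the dynamics: because potential response choices are time-invariant (so I may write $S_i(r)$), because $S_i(r)$ is nondecreasing in $r$, and because $R_{it}$ is nondecreasing in $t$ with $R_{i0}=0$ and $S_i(0)=0$, the indicator $S_{ik}=S_i(R_{ik})\prod_{j<k}(1-S_i(R_{ij}))$ equals the increment $S_i(R_{ik})-S_i(R_{i,k-1})$ and switches on exactly once, at the first period in which the accumulated request count reaches subject $i$'s response threshold. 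Summing these increments over $k\le t$ telescopes to the first-stage identity $\hat S_{it}=S_i(R_{it})$ noted in the text, which holds purely algebraically: conditional on $R_{it}=r$, the event $\hat S_{it}=1$ coincides with $S_i(r)=1$ regardless of the intermediate request path.

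For the denominator I would apply Intertemporal Independence at $t'=t$, which makes $S_i(r)=S_{it}(r)$ independent of $R_{it}$. Hence $\mathbb{E}[\hat S_{it}\mid R_{it}=r]=\mathbb{E}[S_i(r)]=P(r)$, and the denominator equals $P(r)-P(r')=\mathbb{E}[S_i(r)-S_i(r')]$. By the sign restriction in Intertemporal Monotonicity, $S_i(r)-S_i(r')\in\{0,1\}$ for $r>r'$, so this is exactly the share of $r$--$r'$ compliers, which Relevance guarantees is strictly positive.

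The crux is the numerator, where I must show $\mathbb{E}[\hat Y_{it}\mid R_{it}=r]=\mathbb{E}[Y_i^* S_i(r)]$. Starting from $\hat Y_{it}=\sum_{k\le t}Y_{ik}(1,R_{ik})S_{ik}$, Exclusion removes the request argument, leaving $\hat Y_{it}=\sum_{k\le t}Y_{ik}(1)(S_i(R_{ik})-S_i(R_{i,k-1}))$, the response recorded at whatever (possibly early) period the subject first complied. Intertemporal Independence lets me strip the conditioning on the request path at the level of potential responses and choices, reducing the conditional mean to a sum, over compliance times, of bracket-weighted averages $\mathbb{E}[Y_{ik}(1)\,\mathbf{1}(\rho_i\in(r_{k-1},r_k])]$ of the time-$k$ potential responses, where $\rho_i$ is the threshold request. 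Classical Measurement Error is what lets this collapse: since $\mathbb{E}[Y_{ik}(1)S_i(r)]=\mathbb{E}[Y_i^* S_i(r)]$ for each relevant period, differencing over $r$ gives $\mathbb{E}[Y_{ik}(1)\,\mathbf{1}(\rho_i\in(r',r])]=\mathbb{E}[Y_i^*\,\mathbf{1}(\rho_i\in(r',r])]$ within every request bracket, so each term behaves in expectation like the time-invariant $Y_i^*$, the brackets reassemble into $\mathbb{E}[Y_i^* S_i(r)]$, and the numerator becomes $\mathbb{E}[Y_i^*(S_i(r)-S_i(r'))]$.

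Combining, the ratio is $\mathbb{E}[Y_i^*(S_i(r)-S_i(r'))]/\mathbb{E}[S_i(r)-S_i(r')]$; because $S_i(r)-S_i(r')$ is a $\{0,1\}$ complier indicator, the numerator factors as $\mathbb{E}[Y_i^*\mid S_i(r)-S_i(r')=1]\,\Pr(S_i(r)-S_i(r')=1)$ and the probabilities cancel, delivering $\mathbb{E}[Y_i^*\mid S_i(r)-S_i(r')=1]$, with finiteness of the two retained-response means and $P(r)>P(r')$ supplying exactly the stated regularity. I expect the numerator step to be the main obstacle, since it is where the panel structure genuinely departs from the cross-sectional LATE argument: one must confirm that the retained response is credited to the correct complier group \emph{despite} being measured at an earlier request count than $r$, which is precisely where Exclusion neutralizes the intermediate request level and Classical Measurement Error averages out within-complier response drift. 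The one technical point to watch is that Intertemporal Independence must be invoked strongly enough to remove conditioning on the entire request path, not merely the pair $(R_{it'},R_{it})$ appearing in the condition as stated.
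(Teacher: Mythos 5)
Your proof is correct and follows essentially the same route as the paper's: telescope the retained-response sum via monotonicity to obtain $\hat S_{it}=S_i(R_{it})$, reduce the numerator and denominator to $\mathbb{E}[Y_i^*S_i(r)]$ and $P(r)$ via independence, exclusion, and classical measurement error, then finish with the Wald ratio using monotonicity and relevance to interpret it as a complier mean. The only cosmetic difference is that you apply classical measurement error bracket-by-bracket at the compliance time $k$, whereas the paper first replaces $Y_{ik}(1)$ with $Y_{it}(1)$ and invokes measurement error once at the end; your closing caveat, that independence must cover the full request path rather than just the pair $(R_{it'},R_{it})$, is well taken and applies equally to the paper's own third equality.
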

\begin{proof}
    Assuming that Conditions 1–3 and \ref{cond:classical_measurement_error} hold, we have
    \begin{align*}
        \mathbb{E}[\hat Y_{it}|R_{it}=r]
        &=
        \mathbb{E}[\sum_{k=0}^tY_{ik}(1,R_{ik})
         S_{ik}(R_{ik})
        \prod_{j=0}^{k-1}
        (1- S_{ij}(R_{ij}))|R_{it}=r]
        \\
        &=
        \mathbb{E}[
        \sum_{k=1}^t
        Y_{ik}(1,R_{ik})
        (
         S_{i}(R_{ik})
        -S_i(R_{i,k-1})
        )
        |R_{it}=r]
        \\
        &=
        \mathbb{E}[
        Y_{it}(1)
        \sum_{k=1}^t
        (
         S_{i}(R_{ik})
        -S_i(R_{i,k-1})
        )
        |R_{it}=r]
        \\
        &=
        \mathbb{E}[
        Y_{it}(1)
         S_{i}(r)
        ]
        \\
        &=
        \mathbb{E}[
        Y_{i}^*
         S_{i}(r)
        ]. 
    \end{align*}
The first equality rewrites retained responses in terms of potential outcomes.
The second equality follows from monotonicity.
The third equality follows from independence and exclusion.
The fourth equality follows from monotonicity and independence.
The fifth equality follows from classical measurement error.

Assuming that conditions 1–5 hold, it follows that the instrumental variables estimand identifies the local average response for $r>r'$ as
\begin{align*}
\frac{
        \mathbb{E}[\hat Y_{it}|R_{it}=r]
        -
        \mathbb{E}[\hat Y_{it}|R_{it}=r']
        }
        {
        \mathbb{E}[\hat S_{it}|R_{it}=r]-\mathbb{E}[\hat S_{it}|R_{it}=r']
        }
        &=
        \frac{
        \mathbb{E}[
        Y_i^*
         S_{i}(r)
        ]
        -
        \mathbb{E}[
        Y_i^*
         S_{i}(r')
        ]
        }
        {
        P(r)-P(r')
        }
        \\
        &=
        \mathbb{E}[
        Y_{i}^*
        |
         S_{i}(r)
        - S_i(r')=1
        ],
        \end{align*}
        where the first equality makes the substitution $\mathbb{E}[\hat Y_{it}|R_{it}=r] = 
        \mathbb{E}[Y_i^*
         S_{i}(r)
        |R_{it}=r]$ as derived above, and makes the substitution 
        $\mathbb{E}[\hat S_{it}|R_{it}=r] = 
        P(r)
        $
        following the same steps. 
        The second equality follows from monotonicity and relevance.
\end{proof}

Our result establishes conditions under which data requests are valid instruments for retained responses in data where subjects' requests and responses are observed multiple times.
It follows that established methods that use instrumental variables to address selection bias, such as that of \cite{heckman1979sample}, can be used in settings without explicitly randomized response incentives.
The insight underlying these methods is that differences in responses between otherwise-similar subjects with different response rates can be explained by dependence between potential responses and response aversion. 
A more thorough review of these methods is provided by \cite{dutz2021selection}, along with a generalized model of survey aversion with heterogeneous effects of instruments on  responsiveness.

\section{Application: Gender Entrepreneurship Gap}\label{sect:application}
We use our method to estimate the gender gap in entrepreneurial aspirations among undergraduate students.
There is a large gender gap in entrepreneurship among working-age adults \citep{aldrich2005entrepreneurship}, perhaps driven at least in part by a gender gap in early-stage funding for new ventures \citep{canning2012women,greene2003women}.
Our paper contributes to this literature by investigating pre-market gender gaps in entrepreneurial career interest, which may be more likely to develop due to intrinsic interest gaps or pre-market discrimination rather than labor market discrimination.
This application showcases the importance of researcher judgments in defining the time index, adjusting for nonrandom variation in requests, and testing modeling assumptions in cases where there are more requests than are needed for identification.
In Supplemental Appendix \ref{appendix:selection_in_surveys}, we demonstrate our method's accuracy by showing that nonresponse-corrected estimates are closer to ground truth values from administrative data than respondent averages for the Norway in Corona Times Survey, using survey and ground truth averages reported by \cite{dutz2021selection}.

\subsection{Data}
We use a survey of the entrepreneurial intentions of the undergraduate population at University of Wisconsin–Madison that was implemented every fall from 2015 to 2022, as well as in spring 2020 and 2021. 
Students could choose between ``Yes,'' ``No,'' or ``I Don't Know,'' when responding to a question regarding whether they intend to pursue a career in entrepreneurship.
We code an ``Intention'' variable equal to 1 if the student answered ``Yes'' and 0 if they answered ``No'' or ``I Don't Know.''
We observe term-specific survey request stratum indicators for groups of students who are sent requests at the same time through survey software.
We observe timestamps for each request made to each stratum and for each student's response if they respond.
Students who opted out of the survey in a prior year have no stratum indicators and no request or response timestamps in the raw data, whereas students who respond to early requests have stratum indicators and request timestamps for all \textit{intended} requests in raw data, though they did not receive additional actual requests after responding.

Overall, 103,536 unique students and 333,201 total student-terms were surveyed.
We merge the survey data to administrative records containing information on students regardless of whether they responded to the survey.
Business Major and STEM Major are set to one if the student has any major in the term on the business school's list of majors or on the list of STEM majors provided by \cite{ice2020}, respectively.
GPA refers to the student's cumulative grade point average that varies across terms, ACT Math is the average of students' ACT math test scores on file, and ACT Verbal is the average of the sum of English and Reading portions of the ACT test.%
\footnote{We convert SAT scores into ACT scores for students without ACT scores using the 2018 ACT/SAT concordance tables provided by act.org.}
Female, Racial Minority, and International are binary indicators for female gender from binary gender information in raw data, listing a race other than white, and having a recorded country of origin other than the U.S., respectively. 
We drop 6,075 student-terms with missing GPAs (students who dropped out in their first term) and 33,296 observations with missing ACT scores, leaving us with 290,588 student-terms and 82,000 unique students.
There were 42,902 total survey responses, with a response rate of approximately 14\% for men and 16\% for women.
Sample means of key variables broken down by response timing are provided in Table \ref{tab:descriptives}.

\begin{table}[htbp]\centering
\def\sym#1{\ifmmode^{#1}\else\(^{#1}\)\fi}
\caption{Summary Statistics}
\footnotesize{
\label{tab:descriptives}
\begin{tabular}{@{\extracolsep{4pt}}l*{10}{c}}
\toprule
& \multicolumn{3}{c}{Early Respondents} & \multicolumn{3}{c}{Late Respondents} & \multicolumn{3}{c}{Nonrespondents} \\
\cline{2-4} \cline{5-7} \cline{8-10}
& \multicolumn{1}{c}{Men} & \multicolumn{1}{c}{Women} & \multicolumn{1}{c}{All} & \multicolumn{1}{c}{Men} & \multicolumn{1}{c}{Women} & \multicolumn{1}{c}{All} & \multicolumn{1}{c}{Men} & \multicolumn{1}{c}{Women} & \multicolumn{1}{c}{All} \\
& \multicolumn{1}{c}{(1)} & \multicolumn{1}{c}{(2)} & \multicolumn{1}{c}{(3)} & \multicolumn{1}{c}{(4)} & \multicolumn{1}{c}{(5)} & \multicolumn{1}{c}{(6)} & \multicolumn{1}{c}{(7)} & \multicolumn{1}{c}{(8)} & \multicolumn{1}{c}{(9)} \\
\midrule 
Intention & 0.378& 0.169& 0.261 & 0.366& 0.180& 0.266& & &  \\
Female & 0.000& 1.000& 0.561 & 0.000& 1.000& 0.538& 0.000 & 1.000 & 0.513 \\
Business Major & 0.126& 0.090& 0.106 & 0.127& 0.096& 0.110& 0.120 & 0.082 & 0.100 \\
STEM Major & 0.492& 0.305& 0.387 & 0.488& 0.297& 0.385& 0.464 & 0.288 & 0.374 \\
GPA & 3.427& 3.450& 3.440 & 3.413& 3.450& 3.433& 3.287 & 3.398 & 3.344 \\
ACT Math & 29.813& 27.638& 28.594 & 29.827& 27.515& 28.584& 29.393 & 27.153 & 28.245 \\
ACT Verbal & 56.956& 56.929& 56.941 & 57.349& 57.021& 57.173& 56.287 & 56.125 & 56.204 \\
Racial Minority & 0.244& 0.228& 0.235 & 0.287& 0.262& 0.274& 0.304 & 0.283 & 0.293 \\
International & 0.100& 0.065& 0.080 & 0.110& 0.071& 0.089& 0.106 & 0.072 & 0.089 \\
\midrule  
\multicolumn{1}{l}{Observations} &\multicolumn{1}{c}{10154} &\multicolumn{1}{c}{12958} &\multicolumn{1}{c}{23112} & \multicolumn{1}{c}{9147} &\multicolumn{1}{c}{10643} &\multicolumn{1}{c}{19790} &\multicolumn{1}{c}{120730} &\multicolumn{1}{c}{126956} &\multicolumn{1}{c}{247686}  \\
\bottomrule  
\end{tabular}  

}
\begin{minipage}{1\linewidth}
\smallskip
\footnotesize
\emph{Notes:} Variable means for men and women by responder group.
Early Respondents respond prior to the second data request, while Late Respondents respond any time after the second request.
\end{minipage}
\end{table}

In the raw data, men's entrepreneurial intention rate drops from 37.8\% among early respondents to 36.6\% among late respondents, while women's entrepreneurial intention rate rises from 16.9\% among early respondents to 18.0\% among late respondents.
Most of the other variables we review do not exhibit any strong pattern, with the exception of racial minority status.
We find that racial minorities are underrepresented among survey respondents, representing 23.5\% of early respondents, 27.4\% of late respondents, and 29.3\% of nonrespondents.

\subsection{Constructing a Request-Response Panel}
Our results from Section \ref{sect:Theory} involve panel data wherein subjects are observed multiple times as data requests are made, so we construct such a panel for each term.
We index terms/semesters by $s$, with retained responses, retained response choices, and accumulated requests denoted by $\hat Y_{ist}$, $\hat S_{ist}$, and $R_{ist}$, respectively.
In our setting, students who opted out of the survey in a prior term and students who respond early in a given term do not receive subsequent requests.
It follows that actual requests almost surely violate the independence assumption in Section \ref{sect:Theory}, as students receiving each number of requests differ systematically.
We address this by randomly assigning opt-outs to strata ex post and assuming $S_{ist}(r)=0$ for all $t$ and all $r$ for such students, then imputing \textit{intended} request timestamps for each student using the request timestamps of nonrespondents in their stratum.

For each student-term in a stratum, we produce observations for $t$ in $0,1,...,R_i$ such that time periods are defined as intervals in which a subject has received a given number of \textit{intended} requests, with $R_{ist}=t$ for all $i$, $s$, and $t$.
Independence and monotonicity require that the times at which requests were made had representative potential responses and that subjects had sufficient time to respond between requests, respectively.
We report request timestamps in Appendix \ref{appendix:survey_timing}, revealing that the shortest interval between requests is three days, which we assume was adequate time for students to respond.
We set $S_{ist}=1$ if subject $i$'s response timestamp is between those defining the beginning and end of period $t$, with $Y_{ist}$ given by their response to the survey if $S_{ist}=1$.
We forward-impute retained response choices $\hat S_{ist}$ and retained responses $\hat Y_{ist}$ as described in Section \ref{sect:Theory} for each value of $s$.

\subsection{Nonresponse-Corrected Entrepreneurial Intentions by Gender}\label{sect:no_controls}

\begin{figure}[htbp!]
\centering
     \begin{subfigure}[t]{1\textwidth}
     \subcaption[t]{Panel A: Men}
     \centering\includegraphics[width=\linewidth]{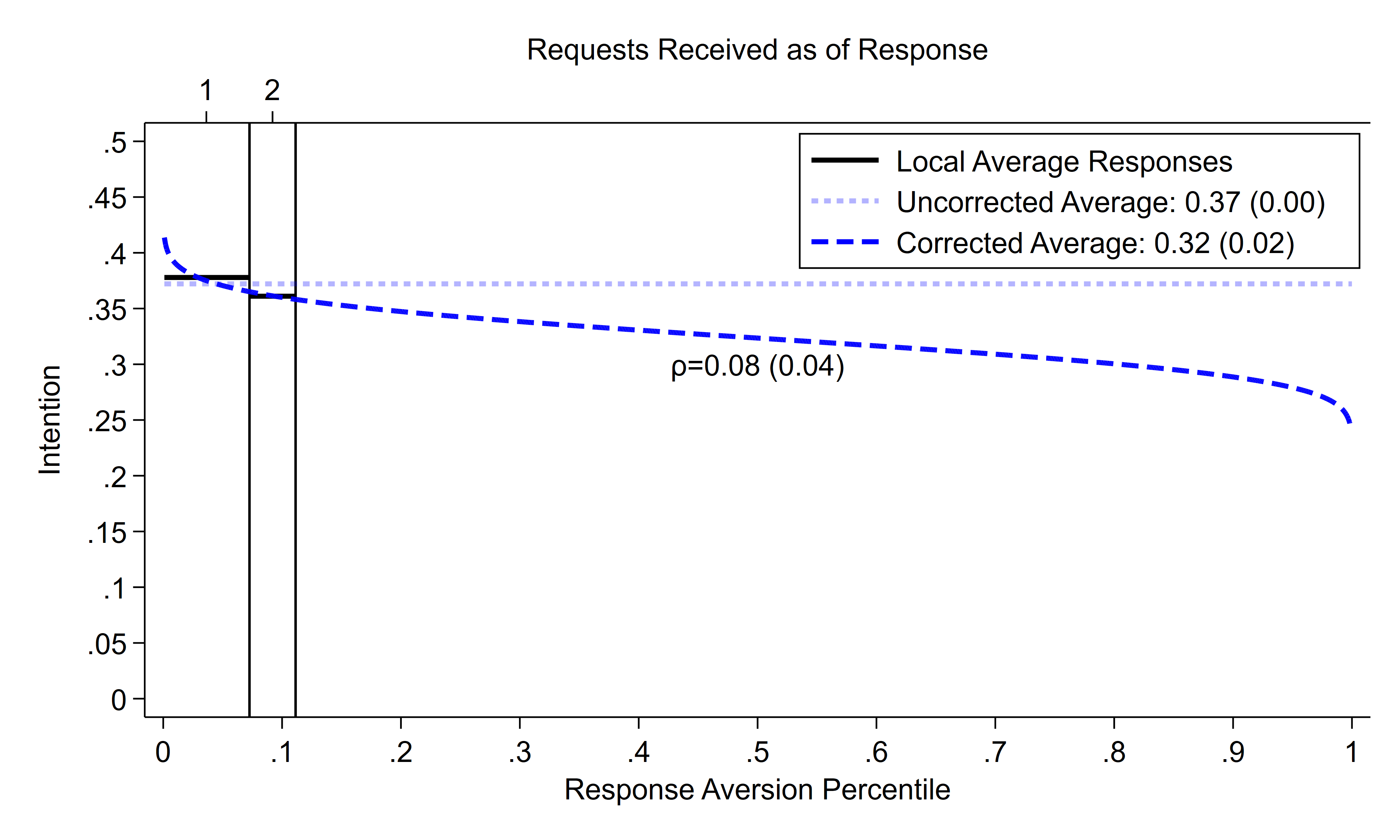}
    \end{subfigure}
    \begin{subfigure}[t]{1\textwidth}
    \subcaption[t]{Panel B: Women}
    \centering\includegraphics[width=\linewidth]{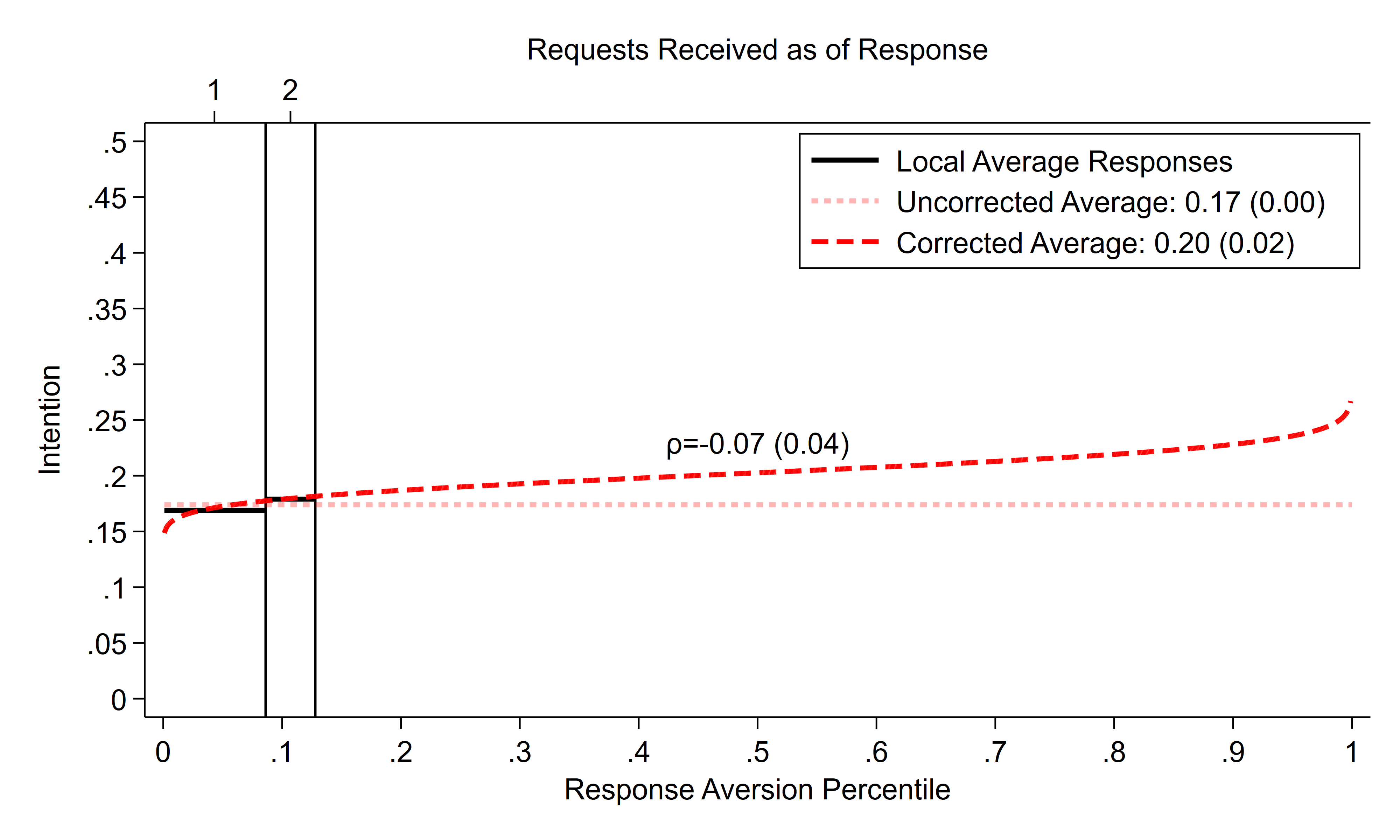}
    \end{subfigure}
    \caption{Observed and Estimated Entrepreneurial Intention, by Gender} 
    \label{fig:no_controls}
    \begin{minipage}{1\linewidth}
\smallskip
\footnotesize
\emph{Notes:} Average entrepreneurial intention rates as a function of response aversion.
Local average responses are given by $\beta$ from the equation $\hat Y_{it} = \hat S_{it}\beta+\epsilon_i$, which we estimate via 2SLS using $I(R_{it}=r)$ as an instrument for $\hat S_{it}$ in the subsample with $R_{it} \in [r-1,r]$, for $r=1,2$.
Corrected estimates by survey aversion percentile are obtained by estimating a \cite{van1981demand} model and calculating $\hat m(u) = \Phi((\hat\beta+\hat\rho\Phi^{-1}(1-u))/\sqrt{1-\hat\rho^2})$ for $u\in[0,1]$.
Standard errors in parentheses are clustered at the student level.
\end{minipage}
\end{figure}

We begin by estimating the raw gender gap in entrepreneurial intentions without adjusting for the covariates listed in Table \ref{tab:descriptives}.
Because requests vary in number from two to seven between terms, we drop all observations with $R_{it}>2$ to support our independence assumption for the current specification.
We estimate a parametric \cite{van1981demand} selection model via maximum likelihood, separately for men and women.
Specifically, we assume  \hfill

\begin{equation}\label{Binary_response_choice}
\hat S_{ist}= 
\begin{cases} 
& 1 \mbox{ if } Z_{ist}\alpha+u_{ist} \geq 0, \\
& 0 \mbox{ otherwise,}
\end{cases}
\end{equation}

\begin{equation}\label{Binary_response}
\hat Y_{ist}= 
\begin{cases} 
 1 \mbox{ if } X_{is}\beta+\epsilon_{ist} \geq 0 \quad  \& \quad & \hat S_{ist}=1 , \\
 0 \mbox{ if } X_{is}\beta+\epsilon_{ist} < 0 \quad  \& \quad & \hat S_{ist}=1, \\
 0 \mbox{ if } & \hat S_{ist}=0, 
\end{cases}
\end{equation}
and
\hfill
\begin{equation}\label{Normality}
\begin{bmatrix}
\epsilon_{ist} \\
u_{ist}
\end{bmatrix}
\sim
\mathcal{N}
\left(
\begin{bmatrix}
0 \\
0
\end{bmatrix}
,
\begin{bmatrix}
1 \qquad \rho \\
\rho \qquad 1
\end{bmatrix}
\right),
\end{equation}
where $X_{is} = 1$, and $Z_{ist} = [X_{is}, I(R_{ist}=2)]$.
We assume $(\epsilon_{ist},u_{ist})$ are independent and identically distributed between $i$, with potential dependence allowed between $s$ and $t$ for each $i$.
$\rho$ gives the correlation between outcome residuals $\epsilon_{ist}$ and response preference $u_{ist} = \Phi^{-1}(1-U_{ist})$, where $U_{ist}$ is $i$'s response aversion percentile in the notation of Section \ref{sect:Theory} and $\Phi^{-1}(\cdot)$ denotes the inverse of the normal CDF.%
\footnote{
Our 15\% response rate is insufficient for informative bounds from conservative approaches such as that of \cite{horowitz2000nonparametric}, and we doubt that gender satisfies the monotonicity assumption of \cite{lee2009training} or \cite{behaghel2015please}, rendering the ``local gender gap'' among survey respondents unidentified, to say nothing of its policy-relevance.
}

This gender-specific constant-only model implies that expected intention as a function of response aversion is $m(u) =\Phi((\beta+\rho\Phi^{-1}(1-u))/\sqrt{1-\rho^2})$, with the unconditional expectation  expressed as $ \int_0^1m(u)du =  \Phi(\beta)$.
We graphically present our estimates of $m(u)$ for men and women in Figure \ref{fig:no_controls} for $u \in [0,1]$.
Local average responses and sample averages for compliers to the first two requests are shown for reference.
The levels of entrepreneurial intention for early and late respondents show a decreasing pattern for men and an increasing pattern for women, as in Table \ref{tab:descriptives}.
Differences in selection bias between men and women,
reflected by their $\rho$ estimates, 
explain the difference between the 20 percentage point uncorrected gap estimate and the 12 percentage point corrected gap estimate.

\subsection{Oaxaca-Blinder-Kitagawa Decomposition}
In addition to estimating average entrepreneurial intention for men and women, we also decompose the gender intention gap using a nonlinear extension of methods developed by \cite{kitagawa1955components}, \cite{oaxaca1973male}, and \cite{blinder1973wage}.
Our decomposition estimates, for each covariate in $X_i$, the extent to which the gender gap would be reduced if women's average value of that single variable were equal to that of men, and how much it would be reduced if that variable's coefficient for women were equal to the coefficient for men. These results are shown in Table \ref{tab:Oaxaca_intention}.

To implement the decomposition, we estimate the model in (\ref{Binary_response_choice})-(\ref{Normality}) jointly for men and women via maximum likelihood using gender-specific parameters $(\beta_m,\alpha_m,\rho_m)$ for men and $(\beta_w,\alpha_w,\rho_w)$ for women.
To avoid overweighting terms in which more requests were sent, 
we weight each observation by the inverse of the total number of requests sent to that student in that term.
We define $X_{is}$ as all covariates listed in Table \ref{tab:descriptives} along with term fixed effects, and we define $Z_{ist}$ as binary indicators for each value in $R_{ist}=1,2...,7$ interacted with $X_{is}$.%
\footnote{
\cite{blandhol2022tsls} recommend rich covariate interactions for instrumental variables when estimating local average treatment effects via two stage least squares.
The estimator and target parameters are different here, but we find their argument convincing that if model assumptions hold conditional on $X_i$, then $X_i$ should be flexibly interacted with instruments to support identification.
}
We order the sample of $N_w$ women and $N_m$ men such that subjects with $i=1,...,N_w$ are women and those with $i = N_w+1,...,N$ are men, with $N= N_w+N_m$.

\begin{table}[htbp!]\centering
\def\sym#1{\ifmmode^{#1}\else\(^{#1}\)\fi}
\caption{Oaxaca-Blinder-Kitagawa Decomposition of the Entrepreneurial Intention Gender Gap}
\scriptsize{
\label{tab:Oaxaca_intention}
\begin{tabular}{@{\extracolsep{4pt}}l*{7}{c}}
\hline
& \multicolumn{3}{c}{Uncorrected} & \multicolumn{3}{c}{Corrected} & \\
\cline{2-4} \cline{5-7} 
& Men & Women & Gap & Men & Women & Gap  \\
Variables & (1) & (2) & (3) & (4) & (5) & (6) \\
\hline
Intention & 0.381 & 0.179 & 0.202  & 0.375 & 0.192 & 0.183  \\
 & (0.005)  & (0.003)  & (0.005)  & (0.020)  & (0.015)  & (0.025) \\
[1em]
X \\
Business Major  & 0.120 & 0.084  & 0.004  & 0.120 & 0.084  & 0.004 \\
 & (0.001) & (0.001) & (0.000) & (0.001) & (0.001) & (0.000) \\
STEM Major  & 0.468 & 0.290  & -0.005  & 0.468 & 0.290  & -0.006 \\
 & (0.001) & (0.001) & (0.001) & (0.001) & (0.001) & (0.001) \\
GPA  & 3.305 & 3.406  & 0.005  & 3.305 & 3.406  & 0.005 \\
 & (0.001) & (0.001) & (0.001) & (0.001) & (0.001) & (0.001) \\
ACT Math  & 29.452 & 27.221  & 0.003  & 29.452 & 27.221  & 0.003 \\
 & (0.011) & (0.010) & (0.002) & (0.011) & (0.010) & (0.002) \\
ACT Verbal  & 56.405 & 56.258  & -0.000  & 56.405 & 56.258  & -0.001 \\
 & (0.023) & (0.022) & (0.000) & (0.023) & (0.022) & (0.000) \\
Racial Minority  & 0.299 & 0.277  & 0.001  & 0.299 & 0.277  & 0.001 \\
 & (0.001) & (0.001) & (0.000) & (0.001) & (0.001) & (0.000) \\
International  & 0.105 & 0.072  & 0.005  & 0.105 & 0.072  & 0.005 \\
 & (0.001) & (0.001) & (0.000) & (0.001) & (0.001) & (0.001) \\
All X & & & 0.013 & & & 0.013 \\
 & & & (0.002) & & & (0.003) \\
[1em]
$\beta$ \\
Business Major  & 0.449 & 0.457  & -0.000  & 0.439 & 0.457  & -0.000  \\
 & (0.036) & (0.038) & (0.001) & (0.037) & (0.039) & (0.002) \\
STEM Major  & -0.039 & -0.118  & 0.005  & -0.041 & -0.122  & 0.006  \\
 & (0.024) & (0.027) & (0.003) & (0.025) & (0.028) & (0.003) \\
GPA  & -0.181 & -0.198  & 0.015  & -0.186 & -0.198  & 0.010  \\
 & (0.024) & (0.026) & (0.031) & (0.026) & (0.027) & (0.034) \\
ACT Math  & 0.007 & 0.006  & 0.012  & 0.008 & 0.006  & 0.014  \\
 & (0.004) & (0.004) & (0.039) & (0.004) & (0.004) & (0.042) \\
ACT Verbal  & -0.013 & -0.013  & 0.008  & -0.013 & -0.013  & 0.010  \\
 & (0.002) & (0.002) & (0.034) & (0.002) & (0.002) & (0.037) \\
Racial Minority  & 0.122 & 0.191  & -0.006  & 0.105 & 0.182  & -0.007  \\
 & (0.032) & (0.030) & (0.004) & (0.034) & (0.031) & (0.004) \\
International  & 0.379 & 0.590  & -0.006  & 0.403 & 0.609  & -0.006  \\
 & (0.049) & (0.053) & (0.002) & (0.051) & (0.055) & (0.002) \\
All $\beta$ &  &  & 0.030 &  &  & 0.030 \\
 &  &  & (0.042) &  &  & (0.047) \\
[1em]
All Unexplained Gaps & & & 0.159 & & & 0.140 \\
 & & & (0.043) & & & (0.059) \\
[1em]
Auxiliary Parameters \\
$\rho$ & 0 & 0 & 0 & 0.010 & -0.032 & 0 \\
& (.) & (.) & (.) & (0.034) & (0.035) & (.) \\
[1em]
Log-Likelihood & -67,349 & -74,546 & -141,895 & -57,558 & -64,508 & -122,066 \\
Observations  & 140,031 & 150,557 & 290,588 & 604,031 & 652,290 & 1,256,321 \\
\hline
Model Tests & Men & Women & Both & Gap \\
p-value: $\mathbb{E}[\mathbb E[Y_i^*|S_i=1,X_i]] = \mathbb{E}[\mathbb E[Y_i^*|X_i]]$ & 0.743 & 0.399 & 0.664 & 0.449  \\
p-value: $\rho=0$ & 0.762 & 0.363 & 0.631 & 0.387 \\
p-value: $\beta_R=0$ (no request effects) & 0.135 & 0.411 & 0.200 \\
\hline
\end{tabular}

}
\begin{minipage}{1\linewidth}
\smallskip
\footnotesize
\emph{Notes:}
Uncorrected estimates obtained by estimating a \cite{van1981demand} model using only the final request observation for each student-term with the constraint $\rho=0$.
Corrected estimates obtained by estimating the same model using all student-term-request observations with unconstrained $\rho$, with observations weighted by the inverse of total requests by student-term.
Group-specific population mean estimates of variables and parameters, as indicated, are in columns (1), (2), (4), and (5), with sample means of $X$ calculated across student-term observations.
Columns (3) and (6) give uncorrected and corrected estimated contributions of each variable or coefficient to the estimated gap in entrepreneurial intentions, using equations (\ref{x_gap}), (\ref{b_gap}), (\ref{X_gap}), (\ref{B_gap}), and (\ref{Unexplained_gap}).
$\mathbb{E}[\mathbb E[Y_{is}^*|S_{isT}=1,X_i]]$ is estimated as the average of $\Phi(X_{is} \hat \beta^{uncorrected})$ and $\mathbb{E}[\mathbb E[Y_{is}^*|X_{is}]]$ is estimated as the average of $\Phi(X_{is} \hat \beta^{corrected})$ over student-term observations.
Term fixed effects, included in $X_i$ to support the independence assumption, are treated as part of the unexplained gap.
Standard errors clustered at the individual level are in parentheses.
\end{minipage}
\end{table}

We estimate the contribution of $x_{is} \subset X_{is}$ to the entrepreneurial intention gap as
\begin{equation}\label{x_gap}
    \hat\Delta_x = 
    \sum_{i=1}^{N_w}
    \frac{
    \Phi(X_{is}\hat\beta_w+(\bar x_{m}-\bar x_{w})\hat\beta_{w,x})
    -\Phi(X_{is}\hat\beta_w)
    }
    {N_w
    },
\end{equation}
wherein we denote the sample average of $x_i$ for men and women as $\bar x_{m}$ and $\bar x_{w}$, respectively, and we 
denote male and female coefficients on $x_i$ as $\beta_{m,x}$ and $\beta_{w,x}$, respectively.%
\footnote{Fairlie's \citeyearpar{fairlie2005extension} decomposition for nonlinear models considers the case in which the marginal distribution of $x_{is}$ for women is set equal to men's, while ours considers the case in which the mean of $x_{is}$ for women is set equal to men's via a level shift of women's distribution.
We view the counterfactual we consider as likely simpler to understand and simpler to implement for policymakers, though describing $\beta_{w,x}$ as the ``effect'' of changing $x_{is}$ for women requires assumptions beyond those that either we or \cite{fairlie2005extension} make.
}
We similarly estimate the contribution to the average gender gap of $x_{is}$'s differential effect on $Y^*_{is}$ between men and women as \hfill
\begin{equation}\label{b_gap}
    \hat\Delta_{\beta_x} = 
    \sum_{i=1}^{N_w}
    \frac{
    \Phi(X_{is}\hat\beta_w+x_{is}(\hat\beta_{m,x}-\hat\beta_{w,x}))
    -\Phi(X_{is}\hat\beta_w)
    }
    {N_w
    }.
\end{equation}
In addition to estimating the contribution of each variable and its coefficient on the gender-intention gap, we also estimate the contribution of the entire vector $X_{is}$ on the gap as
\begin{equation}\label{X_gap}
    \hat\Delta_X = 
    \sum_{i=1}^{N_w}
    \frac{
    \Phi(X_{is}\hat\beta_w+(\bar X_{m}-\bar X_{w})\hat\beta_{w})
    -\Phi(X_{is}\hat\beta_w)
    }
    {N_w
    },
\end{equation}
defining the vector of sample means of all covariates in $X_{is}$ for men as $\bar X_{m}$ and for all women as $\bar X_{w}$.
Similarly, we estimate the contribution of the entire vector of gender differences in coefficients on the gender-intention gap as
\hfill
\begin{equation}\label{B_gap}
    \hat\Delta_{\beta} = 
    \sum_{i=1}^{N_w}
    \frac{
    \Phi(X_{is}\hat\beta_w+X_{is}(\hat\beta_{m}-\hat\beta_{w}))
    -\Phi(X_{is}\hat\beta_w)
    }
    {N_w
    }.
\end{equation}
We estimate the remaining gap, which is driven by the interaction of male-female differences in $X$ and male-female differences in $\beta$, distributional differences in $X$ between men and women, and nonlinearity of the normal distribution as
\hfill
\begin{equation}\label{Unexplained_gap}
    \hat\Delta_{R} = 
    \sum_{i=N_w+1}^{N}
    \frac{
    \Phi(X_{is}\hat\beta_m)
    }
    {N_m}
    -
    \sum_{i=1}^{N_w}
    \frac{
    \Phi(X_{is}\hat\beta_w)
    }
    {N_w}
    -
    \hat\Delta_{X}
    -
    \hat\Delta_{\beta}
    .
\end{equation}

Table \ref{tab:Oaxaca_intention} contains estimated means of the variable or coefficient listed in each row for men, women, and the contribution of that item to the intention gap, using predicted values from a naive probit in columns 1–3, and using our proposed selection correction model in columns 4–6.
$X_{is}$ is observed for all individuals regardless of survey response, so the means of $X_{is}$ by gender are the same for the uncorrected model and the corrected model.
Columns 3 and 6 report estimated contributions of each row quantity to the gender gap as described in equations (\ref{x_gap}), (\ref{b_gap}), (\ref{X_gap}), (\ref{B_gap}), and (\ref{Unexplained_gap}).
We perform our decomposition for all variables in Table \ref{tab:descriptives}, with term fixed effects included in $X_{is}$ contributing only to the unexplained gap.

We estimate the overall gender gap by calculating
\begin{equation}\label{Total_gap}
    \hat\Delta = 
    \sum_{i=N_w+1}^{N}
    \frac{
    \Phi(X_{is}\hat\beta_m)
    }
    {N_m}
    -
    \sum_{i=1}^{N_w}
    \frac{
    \Phi(X_{is}\hat\beta_w)
    }
    {N_w}
    ,
\end{equation}
where our estimates of $(\hat \beta_w, \hat \beta_m)$ differ by estimation method.
The estimated gender-intention gap from our preferred specification is 18.3 percentage points, with a naive probit producing an estimate of 20.2 percentage points.
Differences between the corrected and uncorrected model are due to 
the use of only observations with $t=T$ for the naive probit
and the inclusion of the $\rho$ parameter in the corrected model (with $\rho=0$ fixed in the uncorrected model) which is positive if early respondents have higher intention and negative if they have lower intention.
Our estimates of $\rho$ in the current specification are insignificant, suggesting limited nonresponse bias, though we see the same pattern with the signs of our point estimates here as we did in the no-controls specification shown in Figure \ref{fig:no_controls}.
Along these same lines, our richer specification repeats the finding in Figure \ref{fig:no_controls} that the uncorrected gap point estimate is larger than the corrected estimate, but here we fail to reject that they are the same in the population (p=0.449).

The difference in corrected estimates in Section \ref{sect:no_controls} and the corrected estimates in Table \ref{tab:Oaxaca_intention} is due to the inclusion of controls and observations for requests three through seven, which were observed in some years but not others.
The richer nonresponse correction specification in this section  approximates a nonparametric specification that would fit a curve through local responses as in Figure \ref{fig:no_controls} for each value of $X_{is}$, where the slope of the curve is determined by a gender-specific $\rho$ that is held fixed across $X_i$.
For both men and women, $\rho$ is attenuated in the richer model, suggesting that observed covariates account for both late responses and differences in intention between late and early responses.
Nonetheless, the same general pattern from Figure \ref{fig:no_controls} persists, with positive selection for men ($\hat\rho_m>0)$ and negative selection for women $(\hat \rho_w<0)$, though these estimates are not statistically significant in this specification.

Our estimates are only reliable if the assumptions in Section \ref{sect:Theory} and equation (\ref{Normality}) hold for this application.
Under these assumptions, all local average responses should be explained by the marginal survey response function.
We assess this by performing a ``predictable trends'' overidentification test wherein we include indicators for requests $R_{it}=3,...,7$ in $X_i$ with coefficient vector $\beta_R$, relying on requests one and two for identification.
We fail to reject the null that later requests have no effect on responses at conventional levels for men (p=0.135), for women (p=0.411), and for both jointly (p=0.200).
However, these point estimates are narrowly insignificant and mostly positive for men as shown in Figure \ref{fig:overid}, suggesting that later requests drive corrected point estimates upward toward the uncorrected estimates for men when they are included, increasing the estimated gender gap.

\begin{figure}[htbp!]
\centering
        \includegraphics[width=\linewidth]{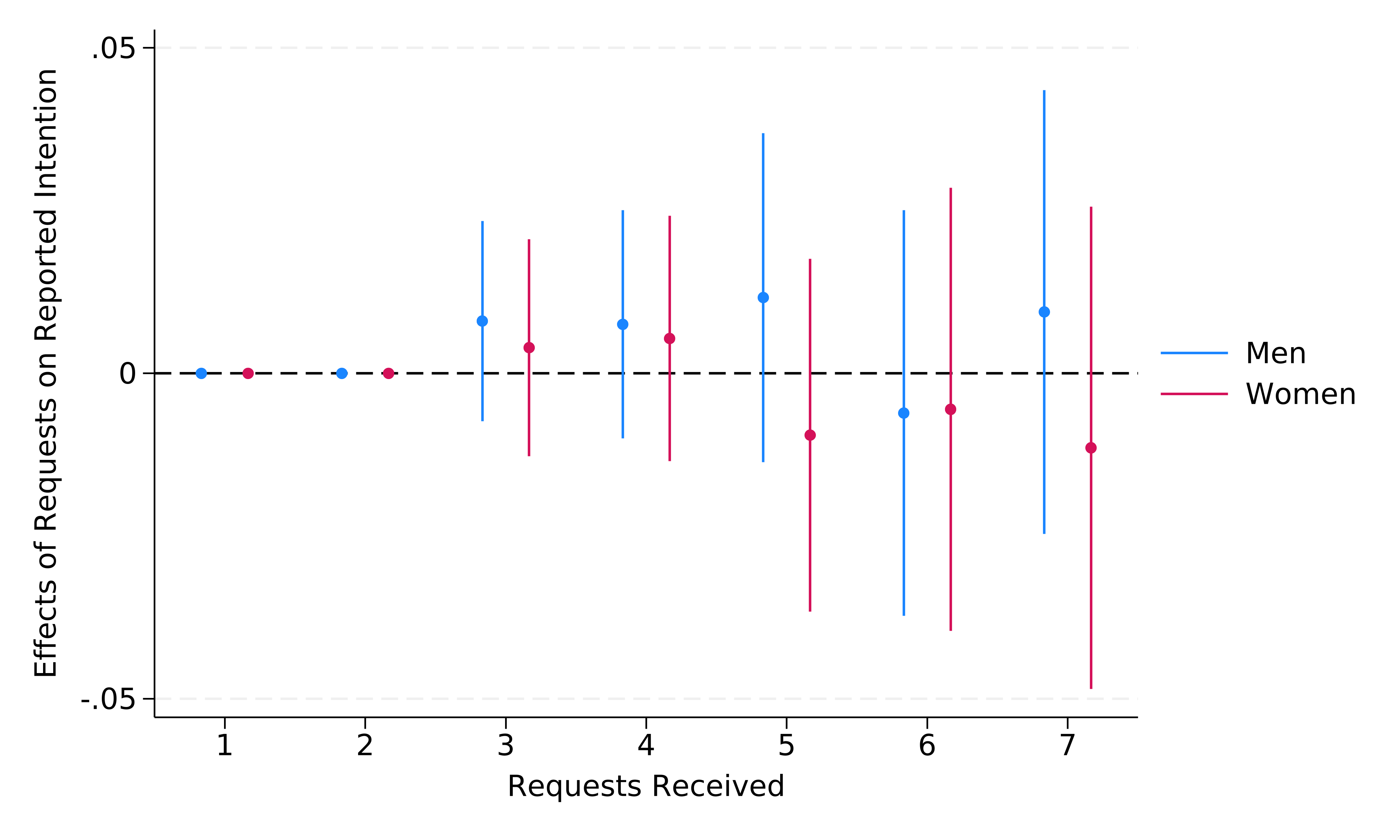}
    \caption{Event Study of Group Means Relative to Model Predictions, by Gender} \label{fig:overid}
    \begin{minipage}{1\linewidth}
\smallskip
\footnotesize
\emph{Notes:} 
We depict the probit coefficient point estimates and 95\% confidence intervals for gender-specific estimated effects of requests $R_{it}>2$ on intention.
We fail to reject that all effects are zero, with p=0.135 for men, p=0.411 for women, and p=0.200 for men and women jointly.
\end{minipage}
\end{figure}

Most of the entrepreneurial intention gender gap is unexplained by our observed variables.
While the contribution of differences in observed covariates to the gender gap of 1.3 percentage points is statistically significant at conventional levels, it explains a small fraction of the overall gap.
Meanwhile, the total corrected contribution of gender differences in $\beta$ to the gender gap is larger at 3 percentage points, but it is not significant at conventional levels.
The largest statistically significant individual contributors to the gender intention gap are gender differences in the rates of choosing a STEM major and gender differences in the relative rates of intention between STEM and non-STEM students.
Interestingly, these factors approximately cancel each other out.
On one hand, there are more male STEM majors, who have somewhat lower entrepreneurial intention than male non-STEM majors.
Meanwhile, there are fewer female STEM majors, but female STEM majors have much lower entrepreneurial intention than female non-STEM majors.

Results from our preferred specification suggest a large gender gap in undergraduate entrepreneurial intention, with  minimal nonresponse bias in this survey.
Despite the apparent lack of significant nonresponse bias, our corrected estimates support substantially different inferences than the uncorrected estimates because of their larger standard errors.
Perhaps the clearest evidence of this is that the uncorrected gap estimate strongly rejects a true population average equal to the corrected point estimate.
Given that our parametric correction gives substantially more precise estimates than conservative approaches such as that of \cite{horowitz1998censoring}, we recommend against drawing inferences while assuming no nonresponse bias in applications such as ours.

\section{Conclusion}\label{sect:Conclusion}
We have established that data requests made to subjects are valid nonresponse correction instruments in panel data containing their accumulated requests and responses over time, under assumptions similar to those routinely made when analyzing survey data.
Specifically, we assume that each data request must be sent to a representative sample of the population, requests should not directly affect responses, and average responses to any given request should be free of systematic measurement error.
While these assumptions are strong,
their importance for securing identification is not specific to our method. 
In addition to these assumptions, we make a monotonicity assumption similar to that of \cite{imbens1994identification} which allows us to order subjects by their response timing and extrapolate from respondents to nonrespondents using existing nonresponse correction methods.

Our procedure for constructing request instruments imposes minimal challenges for data collection and analysis.
It requires no knowledge of econometrics by data collection administrators and no financial incentives for subjects, which can bias responses if subjects maximize their effective wages by minimizing time spent on the survey, such as by using artificial intelligence to concoct responses \citep{westwood2025}.
Meanwhile, it naturally reveals rich variation in response rates when multiple reminders are sent, enabling model validation tests that compare late responses to their predicted values implied by early responses.
These features enable routine use of our method when analyzing data obtained by contacting subjects, either via surveys or via similar procedures such as gathering lab data after medical experiments.

We illustrate our method by estimating the gender gap in entrepreneurial intentions among undergraduate students at the University of Wisconsin-Madison.
To do so, we construct a panel dataset that contains a row for each student for each data request they receive, with their accumulated requests recorded in all rows and their responses recorded only in rows representing time periods after they respond.
We find a large gender gap 
uncorrected estimates strongly rejecting the point estimate of the average gender gap obtained when correcting for nonresponse.
In an additional application in Supplemental Appendix \ref{appendix:selection_in_surveys}, we accurately and precisely estimate ground truth averages from administrative data for five out of six labor market indicators using only aggregated statistics from early and late responses to the 45\% response-rate Norway in Corona Times survey.
Our empirical results suggest that correcting for nonresponse using nonrandom request instruments in subject contact panel data can improve inferences, while posing relatively small challenges for data collection and analysis.

\clearpage

\bibliographystyle{econometrica.bst}
\renewcommand{\bibname}{BIBLIOGRAPHY}
\bibliography{References}


\appendixpageoff
\begin{appendices}
\numberwithin{equation}{section}
\makeatletter 
\newcommand{\section@cntformat}{Appendix \thesection:\ }
\makeatother

\renewcommand{\thesubsection}{\Alph{section}.\arabic{subsection}}

\setcounter{table}{0}
\renewcommand{\thetable}{\Alph{section}.\arabic{table}}
\setcounter{figure}{0}
\renewcommand{\thefigure}{\Alph{section}.\arabic{figure}}

\section{Survey Timing}\label{appendix:survey_timing}
Here we present additional details regarding the implementation of the student entrepreneurship survey that we analyze in Section \ref{sect:application}.
In each term, survey administrators randomly assigned students into approximately ten strata using a random number generator in Stata.
The stratification allowed survey administrators to confirm that request emails were error free for each group prior to sending them to the next group.
This feature of the survey administration caused the time-stamps for requests to vary slightly between strata within each term.

We report the modal dates of each request in each term alongside their cumulative response rates in Table \ref{tab:survey_timing}.
We observe a secular trend in the number of requests over time, with a downward trend in responsiveness to earlier requests.
These apparent systematic differences between terms motivate our use of term fixed effect controls and term$\times$request interactions for our preferred specification in Table \ref{tab:Oaxaca_intention}, which approximately averages results from separate term-specific models.

Independence and monotonicity require that responses received in the intervals between time periods are representative and that subjects have sufficient time to respond to each request prior to receiving the next, respectively.
Independence could be violated if particular events had systematic effects on self-reported entrepreneurship, which we do not expect for entrepreneurial intention.%
\footnote{One example of a situation in which independence would be threatened by time-varying measurement error is a survey of psychological symptoms with a reminder sent on a holiday.}
Monotonicity could be violated for some of our requests if some subjects went several days without checking their emails, though these violations would be inconsequential if slow responses indicate relatively high survey aversion among compliers to each request.%
\footnote{
Formalizing the relationship between response speed for each request and response aversion exceeds the scope of this paper, but
if elapsed time since the first request were a valid instrument used in place of requests, late compliers to request $r$ would have survey aversion between early $r$ compliers and early $r+1$ compliers.
}

\begin{table}[htbp!]\centering
\def\sym#1{\ifmmode^{#1}\else\(^{#1}\)\fi}
\caption{Cumulative Response Rates by Term and Request}
\scriptsize{
\label{tab:survey_timing}
\begin{tabular}{@{\extracolsep{4pt}}l*{9}{c}}
Term  & Request 1& Request 2& Request 3& Request 4& Request 5& Request 6& Request 7\\
\midrule 
2015, Fall \\
Request Date & 11/23 & 12/3 &  &  &  &  &   \\
Cum. Response Rate &  0.16 &  0.19 & & & & &  \\
[1em] 
2016, Fall \\
Request Date & 10/24 & 12/12 & 12/21 &  &  &  &   \\
Cum. Response Rate &  0.11 &  0.17 &  0.19 & & & &  \\
[1em] 
2017, Fall \\
Request Date & 11/7 & 11/29 & 12/5 &  &  &  &   \\
Cum. Response Rate &  0.13 &  0.17 &  0.19 & & & &  \\
[1em] 
2018, Fall \\
Request Date & 10/30 & 11/8 & 11/29 & 12/15 & 12/26 &  &   \\
Cum. Response Rate &  0.07 &  0.11 &  0.12 &  0.14 &  0.15 & &  \\
[1em] 
2019, Fall \\
Request Date & 11/1 & 11/7 & 12/9 & 12/21 & 12/29 &  &   \\
Cum. Response Rate &  0.07 &  0.11 &  0.12 &  0.12 &  0.13 & &  \\
[1em] 
2020, Spring \\
Request Date & 4/22 & 4/27 & 4/30 & 5/11 &  &  &   \\
Cum. Response Rate &  0.06 &  0.11 &  0.14 &  0.15 & & &  \\
[1em] 
2020, Fall \\
Request Date & 10/8 & 10/15 & 10/20 & 11/2 & 11/17 & 11/30 &   \\
Cum. Response Rate &  0.08 &  0.12 &  0.14 &  0.16 &  0.17 &  0.19 &  \\
[1em] 
2021, Spring \\
Request Date & 4/21 & 4/24 & 4/30 & 5/11 &  &  &   \\
Cum. Response Rate &  0.06 &  0.10 &  0.11 &  0.12 & & &  \\
[1em] 
2021, Fall \\
Request Date & 10/20 & 10/25 & 11/19 & 11/30 & 12/9 & 12/13 & 12/28  \\
Cum. Response Rate &  0.07 &  0.11 &  0.12 &  0.13 &  0.14 &  0.16 &  0.16  \\
[1em] 
2022, Fall \\
Request Date & 10/27 & 11/1 & 11/17 & 11/29 & 1/4 &  &   \\
Cum. Response Rate &  0.04 &  0.07 &  0.09 &  0.09 &  0.10 & &  \\
[1em] 
\bottomrule 
\end{tabular}  

}
\begin{minipage}{1\linewidth}
\smallskip
\footnotesize
\emph{Notes:}
Request timing and cumulative response rates by term.
\end{minipage}
\end{table}

\section{Revisiting the Norway in Corona Times Survey}\label{appendix:selection_in_surveys}
In this section, we demonstrate the performance of our method on a synthetic version of the Norway in Corona Times (NCT) survey, using only estimated quantities reported by \cite{dutz2021selection}, henceforth referred to as DHLMTV.%
\footnote{
We use the version revised February 2025, which we downloaded from NBER on February 24, 2025.
}
This survey requested information from 10,000 randomly selected adult Norwegians regarding their employment and wages before and after COVID-19 lockdowns and obtained approximately a 50\% response rate.
DHLMTV merge the survey with administrative data on survey respondents to investigate monthly earnings before lockdown (converted to USD), monthly earnings after lockdown, the share of individuals with an earnings loss above 20\%, the share of individuals employed before lockdown, the share of individuals employed after lockdown, and the share of individuals who experienced an employment loss after the lockdown.
Importantly for us, these authors provide estimated averages of variables for early and late responders to the NCT survey as well as ground truth values taken from Norwegian administrative data on the same variables.

We begin by matching the moments from DHLMTV's Table 4.
DHLMTV focus on the 93\% of subjects who were invited to take the survey online.
We further restrict our focus to the 40\% of subjects who received no randomized survey incentives to illustrate our method's performance using only requests as instruments.
We therefore construct a synthetic sample of $3,720 = 10,000 \times 0.93 \times 0.40$ subjects.
Following their description of their methods, we assume that the always taker estimates they report in their Table 4 are consistent estimates for average early responses, and that the reminder complier estimates in their Table 4 are consistent estimates for average late responses.

We produce synthetic data to match DHLMTV Table 4, defining the first 38\% of subjects as always-takers, the next 7\% as reminder compliers, and the last 55\% as nonrespondents.
For continuous variables, we first set the means for each group equal to those reported, leaving nonrespondents missing.
Next, we then add the standard deviations, backed out from the reported group-specific standard errors, to half of the subjects in each group and subtract them from the other half.
For binary variables, we set variable values for always takers and reminder compliers to 1 and 0 as needed to make the group-specific averages match those reported in DHLMTV Table 4.
We set variables to missing for the remaining 55\% of the observations that do not correspond to always-takers or reminder compliers.
We then reshape the data so that each subject has one observation each for $R_{i0}=0$, $R_{i1}=1$, and $R_{i2}=2$, where we initialize $\hat S_{it} = 0$ for all $i$ and $t$, replaced with $\hat S_{i1} = 1$ for always-takers and $\hat S_{i2} = 1$ for both always-takers and reminder compliers.

\begin{table}[htbp!]\centering
\def\sym#1{\ifmmode^{#1}\else\(^{#1}\)\fi}
\caption{Methods Comparison using Synthetic Norway in Corona Times Survey}
\label{tab:selection_in_surveys}
\begin{tabular}{@{\extracolsep{4pt}}l*{7}{c}}
\hline
& \multicolumn{3}{c}{Earnings} & \multicolumn{3}{c}{Employment} & \\
\cline{2-4} \cline{5-7} 
& Before & After & Large Loss & Before & After & Loss  \\
Estimator & (1) & (2) & (3) & (4) & (5) & (6) \\
\hline
Panel 1: Ground truth & 3,095 & 2,981 & 0.148 & 0.567 & 0.494 & 0.091 \\
\\
[1em]
Panel 2: Sample Means \\
Always Taker (38\%) & 3,746 & 3,783 & 0.13  & 0.65 & 0.64 & 0.03  \\
 & (116)  & (107)  & (0.01)  & (0.01)  & (0.01)  & (0.00) \\
Reminder Complier (7\%) & 3,244 & 3,257 & 0.12  & 0.55 & 0.55 & 0.03  \\
 & (256)  & (251)  & (0.02)  & (0.03)  & (0.03)  & (0.01) \\
Always+Reminders (45\%) &     3,668 &     3,701 &  0.13 &  0.63 &  0.63 &  0.03  \\ 
& (106) & (98) & (0.00) & (0.01) & (0.01) & (0.00)  \\ 
 \\
[1em]
Panel 3: DHLMTV \\
(Mid)point & 3,368 & 3,232 & 0.142 & 0.588 & 0.536 & 0.091\\
Bounds & & & & [0.567, 0.609] \\
[1em]
Panel 4: Our Method \\
MLE &     3,197 &     3,211 & 0.116 & 0.517 & 0.521 & 0.026  \\ 
& (250) & (238) & (0.023) & (0.039) & (0.039) & (0.011)  \\ 
Two step &     3,107 &     3,113 & 0.116 & 0.520 & 0.523 & 0.026  \\ 
& (303) & (297) & (0.026) & (0.037) & (0.037) & (0.014)  \\ 
 \\
\hline
Observations & 3,720 & 3,720 & 3,720 & 3,720 & 3,720 & 3,720 \\
\hline
\end{tabular}
\begin{minipage}{1\linewidth}
\smallskip
\footnotesize
\emph{Notes:}
Estimates of earnings and employment before and after COVID-19 lockdowns using the NCT survey.
Panels 1-3 reconstructed from selected rows of Tables 4 and 5 of Dutz et al (2025). 
Row 1 of Panel 4 reports full information maximum likelihood estimates of \cite{heckman1979sample} and \cite{van1981demand} selection models for continuous and binary variables, respectively, with standard errors clustered at the individual level in parenthesis.
Row 2 of Panel 4 reports estimates from the two-step estimator of \cite{heckman1979sample}, with bootstrapped standard errors with individual-clustered resampling in parentheses.
\end{minipage}
\end{table}

Our Table \ref{tab:selection_in_surveys} compares estimates using our method with those of Tables 4 and 5 of DHLMTV.
Our results in Section \ref{sect:Theory} regarding the validity of request instruments for identifying local average responses, but do not stipulate a particular means of extrapolation to population averages.
For simplicity, we present point estimates obtained from the full information maximum likelihood version of the \cite{heckman1979sample} selection correction for continuous variables as well as the probit version described by \cite{van1981demand} for binary variables.
We also present estimates from the two-step correction of \cite{heckman1979sample} for both continuous and binary variables.

We compare estimates of survey variable averages using request instruments to ground truth averages, the pooled means of always-takers and reminder compliers, and DHLMTV's estimates from their specification reported in their Table 5.
DHLMTV's specification makes use of multiple requests, randomized incentives, and covariate adjustment in the context of a two-dimensional response aversion model.
We refer readers to DHLMTV for further details on their assumptions and estimation procedures.
To reiterate, our method estimates a one-dimensional response aversion model using multiple requests without using randomized incentives or covariate adjustment.
Our method is capable of making use of covariates, but we do not have access to such data for the NCT survey, so we settle for only making use of the estimated means for always takers and reminder compliers for each outcome variable.

Our assessment is that our method performs better than the uncorrected means denoted in Table \ref{tab:selection_in_surveys} as ``Always+Reminder.''
Specifically, our point estimates are further from the ground truth than the uncorrected estimates for only the Large Loss variable, they are equally distant for the Loss variable, and they are closer for all others.
Meanwhile, we obtain larger standard errors which allow us to better avoid the illusion of certainty in cases where point estimates are imperfect.
For instance, although our MLE population average estimate for Large Loss is further from the ground truth than the sample mean, we strongly reject that the sample mean is an estimate of the ground truth average, whereas we fail to reject this hypothesis for our preferred estimator.
This pattern is repeated for other outcomes, where our preferred method incorrectly rejects the ground truth with 95\% confidence only for employment loss, while sample means differ from ground truth averages of all variables with 95\% confidence.

Relative to the method of DHLMTV, our method performs somewhat better with continuous variables and substantially worse with binary variables.
Specifically, our point estimates are closer to the ground truth than those of DHLMTV for earnings before, earnings after, and employment after, where as their method's estimates are closer to the ground truth than ours for large earnings losses, employment before, and employment loss.
Our method performs particularly poorly with employment loss, which exhibits no variation between always takers and reminder compliers, leading our method to reproduce the in-sample mean as its best guess at the population mean.
Meanwhile, DHLMTV's method obtains the ground truth for this variable exactly.

Our method performs at its worst for estimating a large loss in earnings and employment loss.
One explanation for this is that these variables have means near zero, and our normality assumptions may be particularly at odds with reality for such variables.
Another explanation is that individuals who have experienced negative outcomes may be particularly resistant to responding to (poorly incentivized) surveys about those outcomes. 
We are unable to distinguish these explanations with the data at hand, so we end this section by recommending caution to researchers using our method for rare outcomes and/or those that represent negative life outcomes.
Our conclusion from this section is that methods that leverage both multiple requests and randomized incentives, such as that of DHLMTV, may perform better than only using multiple requests in some cases, but that our method is preferable to uncorrected estimates in cases where randomized survey response incentives were not administered.

\end{appendices}

\end{document}